\newtheorem{theorem}{Theorem}[section]
\newtheorem{corollary}{Corollary}[section]
\newtheorem{proposition}{Proposition}[section]
\newtheorem{definition}{Definition}[section]
\newtheorem{lemma}{Lemma}[section]
\newtheorem{rmk}{Remark}[section]
\newcommand{\Jac}{{\text{Jac}}}
\newcommand{\EE}{\mathbb{E}}
\newcommand{\PP}{{\mathbb{P}}}
\newcommand{\RR}{\mathbb{R}}
\newcommand{\curlyX}{{X}}
\title{Metric Dimension and Resolvability of Jaccard Spaces}
\author[1,*]{Manuel E. Lladser}
\author[1]{Alexander J. Paradise}
\affil[1]{Department of Applied Mathematics, University of Colorado, Boulder, USA}
\affil[*]{Corresponding author: {manuel.lladser@colorado.edu}}
\date{}
\begin{document}

\maketitle

\abstract{
A subset of points in a metric space is said to resolve it if each point in the space is uniquely characterized by its distance to each point in the subset. In particular, resolving sets can be used to represent points in abstract metric spaces as Euclidean vectors. Importantly, due to the triangle inequality, points close by in the space are represented as vectors with similar coordinates, which may find applications in classification problems of symbolic objects under suitably chosen metrics. In this manuscript, we address the resolvability of Jaccard spaces, i.e., metric spaces of the form $(2^X,\text{Jac})$, where $2^X$ is the power set of a finite set $X$, and $\text{Jac}$ is the Jaccard distance between subsets of $X$. Specifically, for different $a,b\in 2^X$, $\text{Jac}(a,b)=|a\Delta b|/|a\cup b|$, where $|\cdot|$ denotes size (i.e., cardinality) and $\Delta$ denotes the symmetric difference of sets. We combine probabilistic and linear algebra arguments to construct highly likely but nearly optimal (i.e., of minimal size) resolving sets of $(2^X,\text{Jac})$. In particular, we show that the metric dimension of $(2^X,\text{Jac})$, i.e., the minimum size of a resolving set of this space, is $\Theta(|X|/\ln|X|)$. In addition, we show that a much smaller subset of $2^X$ suffices to resolve, with high probability, all different pairs of subsets of $X$ of cardinality at most $\sqrt{|X|}/\ln|X|$, up to a factor.

\medskip

\noindent\textbf{Keywords.} Jaccard distance, metric dimension, metric space, multilateration, resolving set
}

\section{Introduction}

A metric space is an ordered-pair of the form $(X,d)$, where $X$ is a nonempty set, and $d:X\times X\to\mathbb{R}$ a function satisfying that $d(x,y)=d(y,x)\ge0$, $d(x,y)=0$ if and only if $x=y$, and $d(x,y)\le d(x,z)+d(z,y)$, for all $x,y,z\in X$. In particular, $d$ is non-negative, symmetric, and satisfies the triangular inequality. We say the metric space is finite when $|X|<+\infty$.

Resolvability extends the concept of trilateration of the plane to general metric spaces; in particular, it includes the vertex set of connected graphs endowed with shortest path distances between vertices---which is where the concept originated~\cite{ErdHarTut65,Sla75,HarMel76}. In a metric space $(X,d)$, a non-empty set $R=\{r_i:i\in I\}\subset X$, with $I=\big\{1,\ldots,|R|\big\}$, is said to \textit{resolve} it when the transformation 
\begin{equation}
d(x|R):=\big(d(x,r_i)\big)_{i\in I}, \text{ for each }x\in X,
\label{def:dxR}
\end{equation}
is one-to-one. In particular, $d(\cdot|R)$ uniquely encodes points in $X$ as $|R|$-dimensional real vectors; and, owing to the triangular inequality, proximate points in $X$ are encoded as vectors with similar coordinates. Resolving sets thus enable sound embeddings of metric spaces into Euclidean ones, which can be useful for generating numerical features of symbolic objects in statistical and machine learning tasks like regression or classification~\cite{TilLla18,RutLla21}.

One can think of a resolving set as a collection of ``landmarks'' in a metric space that uniquely identify the ``location'' of any point in that space by its distance to those landmarks. In that regard, resolvability serves as a form of ``multi-lateration'' of the space, similar to tri-lateration, although more than three landmarks may be needed to resolve a given metric space. 

Irrespective of the metric space, resolving sets always exist, although they are never unique in non-trivial settings. This is because $X$ always resolves $(X,d)$, and if $R$ resolves $(X,d)$ and $S\supset R$, then $S$ also resolves it. So, finding a resolving set is straightforward. In contrast, finding a resolving set with the smallest possible size is usually challenging; in fact, it is an NP-complete problem in arbitrary finite metric spaces~\cite{Kar72,GarJoh79}. Minimizing the size of a resolving set is nonetheless crucial to embedding the points in $X$ into a low-dimensional Euclidean space using transformations of the form~(\ref{def:dxR}). This motivates the notion of \textit{metric dimension}, which is the size of the smallest resolving set of a metric space $(X,d)$, denoted from now on as $\beta(X,d)$.

For a concise overview of resolvability and metric dimension in the context of graph theory, see~\cite{TillFroLla19}. Instead, for a comprehensive review of these and related concepts, see~\cite{TilFroLla23,KuzYer21}.

A very limited number of studies have addressed the resolvability of non-graphical metric spaces in the literature~\cite{Mur75,BauBea13}, as most efforts have focused on finite graphs~\cite{ChaEroJohEtAl00}. Nevertheless, spaces with metric dimensions 1 or 2 have been characterized under general topological assumptions~\cite{Mur75, BauBea13}. It is also known that the metric dimension of a $k$-dimensional subspace of $\mathbb{R}^n$ with respect to the Euclidean distance is $(k+1)$; in particular, $(\mathbb{R}^n,\|\cdot\|_2)$ has metric dimension $(n+1)$~\cite{BauBea13}. The hypersphere $(\mathbb{S}^n,\|\cdot\|_2)$ has also metric dimension $(n+1)$. Additionally, the metric dimension of the hyperbolic space $\mathbb{H}^n$ with respect to the metric $d(x,y):=\int_x^y dx/x_n$, for all $x,y\in\mathbb{H}^n$, is $(n+1)$~\cite{BauBea13}. Likewise, the metric dimension of the $n$-dimensional unit ball $\mathbb{B}^n$ with respect to the metric $d(x,y):=\int_x^y 2\,|dx|/(1-\|x\|^2)$, with $x,y\in\mathbb{B}^n$, is $(n+1)$~\cite{BauBea13}.

In contrast, the systematic study of the resolvability and metric dimension of non-graphical, finite, metric spaces is essentially unexplored. In this paper, we study the resolvability of finite Jaccard metric spaces,  i.e., metric spaces of the form $(2^X,\Jac)$, where $2^X$ denotes the power set of a finite set $X$, and $\Jac$ is the \textit{Jaccard distance} between subsets of $X$~\cite{Jac01}. Namely, for all $a,b\in2^X$,
\[\Jac(a,b):=
\begin{cases}
\frac{|a\Delta b|}{|a\cup b|}, & a\ne b;\\
0, & a=b.
\end{cases}
\]
$\Jac$ is a metric in $2^X$~\cite{Gil72, Kos19}. (In the literature, for distinct $a,b\in 2^X$, the quantity $1-\Jac(a,b)=|a\cap b|/|a\cup b|$ is referred to as the \textit{Jaccard similarity}. This index is widely used in fields such as information retrieval
, data mining, and natural language processing, among many others.) 

Given that $X$ is finite in our setting, we may, in principle, estimate $\beta(2^X,\Jac)$ and find non-trivial resolving sets with the so-called Information Content Heuristic (ICH)~\cite{HauSchVie12}. In a general setting, the input of this algorithm is the (symmetric) distance matrix between all pairs of points in a metric space, and the output is a subset of columns that resolve it, which is determined greedily through an entropy maximization procedure. Unfortunately, however, in the context of Jaccard spaces, the ICH is infeasible even for moderate values of $|X|$ because of its $O(2^{3|X|})$ time complexity.

Nevertheless, besides being of theoretical interest, learning to resolve optimally or nearly optimally Jaccard spaces may find applications in e.g. lexicon-based approaches to natural language processing (NLP). In the most basic implementation of this idea, $X$ would be the set of all words in a language and sentences represented as subsets of $X$ (aka, \textit{bag of words}). The Jaccard distance is then a natural way to assess the similarity of sentences based on the words used, and a resolving set would induce a numerical encoding of sentences, mapping sentences with similar word content into vectors with similar coordinates, potentially providing low-dimensional feature vectors to learn to classify or regress sentences based on their lexicon~\cite{Par24}.

\subsection{Main Results} 

\textbf{In what remains of this manuscript, $X$ is assumed to be a finite non-empty set.}

In this section, we outline our key findings, with expanded statements and proofs provided in Section~\ref{sec:Res&Pro}.

From now on, the Jaccard distance is the reference metric in $2^X$; in particular, e.g., statements like \textit{``$R$ resolves $2^X$,''} mean that \textit{``$R\subset 2^X$ resolves $(2^X,\Jac)$.''} We also say that $R$ \text{resolves} $a,b\in 2^X$ when there exists $r\in R$ such that $\Jac(a,r)\ne\Jac(b,r)$.

We first provide a necessary condition for a set $R$ to resolve $2^X$. 

\begin{proposition}
If $R$ resolves $2^X$ then $R$ separates the distinct elements of $\curlyX$, and it covers all but possibly one element in $X$.
\label{prop:main}
\end{proposition}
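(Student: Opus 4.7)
My plan is to probe the resolving hypothesis using only the singletons of $X$. For any $x\in X$ and any $r\in 2^X$, a direct calculation of $\Jac(\{x\},r)$ gives a very rigid two-valued behavior: if $x\in r$, then $\Jac(\{x\},r)=1-1/|r|$, while if $x\notin r$, then $\Jac(\{x\},r)=1$ (the edge case $r=\emptyset$ gives the value $1$ and is consistent). Because this value depends on $r$ only through $|r|$ and the indicator of $x\in r$, singletons are the natural test objects to extract combinatorial information from the resolving property.

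For the separation claim, I would fix distinct $x,y\in X$ and apply the resolving hypothesis to the distinct elements $\{x\},\{y\}\in 2^X$. Some $r\in R$ must satisfy $\Jac(\{x\},r)\ne\Jac(\{y\},r)$. By the formula above, the only way these two Jaccard distances can disagree is if exactly one of $x,y$ lies in $r$; equivalently, $r$ separates $x$ from $y$. Since $x,y$ were arbitrary, $R$ separates all distinct pairs of $X$.

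For the covering claim, I would argue by contradiction: suppose there are two distinct elements $x,y\in X$ with $x,y\notin r$ for every $r\in R$. Then by the same formula $\Jac(\{x\},r)=1=\Jac(\{y\},r)$ for all $r\in R$, so $R$ fails to resolve the pair $\{x\},\{y\}$. Alternatively, this second statement is a formal consequence of the first, since two commonly uncovered elements cannot be separated by any member of $R$.

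I do not foresee a serious obstacle here; the content of the proposition is essentially that singletons alone already force the separation and coverage structure on $R$. The only care required is bookkeeping a few edge cases in the Jaccard computation (when $r$ is empty, or when $r\in\{\{x\},\{y\}\}$), to ensure the two-valued formula for $\Jac(\{x\},r)$ is applied correctly in all regimes.
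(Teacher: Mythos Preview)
Your proposal is correct and mirrors the paper's own argument: both proofs test the resolving hypothesis on singletons $\{x\}$, exploit the dichotomy $\Jac(\{x\},r)=1-1/|r|$ if $x\in r$ versus $\Jac(\{x\},r)=1$ if $x\notin r$, and derive separation and near-coverage from the resulting collisions. The only cosmetic difference is that the paper additionally splits the covering claim into the subcases $\emptyset\in R$ and $\emptyset\notin R$ (obtaining full coverage in the latter), whereas you observe directly that covering-all-but-one follows from separation; either route suffices for the proposition as stated.
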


The proof of the proposition can be found in Section~\ref{subsec:necessary}. We note that these properties are necessary but not sufficient. For instance, if $\curlyX=\{1,2,3,4\}$ and $R=\big\{\{1,2\},\{1,3\},\{1,4\}\big\}$ then $R$ separates different elements in $\curlyX$ and also covers it. Nevertheless, $R$ is not resolving because $\Jac(\{1\}|R)=(1/2,1/2,1/2)=\Jac(\curlyX|R)$. This counterexample can be easily generalized to sets $X$ of arbitrary size.

Next, we provide a lower bound on the size of any resolving subset of $2^X$; in particular, this is also a lower bound for $\beta(2^X,\Jac)$.

\begin{proposition}
If $R$ resolves $2^X$ then 
\[|R|\ge\frac{|X|(\ln 2)\left(1+o(1)\right)-2\ln\left(|X|/2\right)-1}{\ln\left(|X|/2+1\right)}\sim\frac{|X|\ln2}{\ln|X|}.\]
\label{prop:lb1}
\end{proposition}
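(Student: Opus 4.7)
The strategy is an injection-and-count argument: construct a map from $2^X$ into an explicit product set, show it must be injective whenever $R$ resolves $2^X$, and compare cardinalities to bound $|R|$ from below.

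Writing $R=\{r_1,\ldots,r_m\}$ and $n=|X|$, I would consider the map $\phi:2^X\to\mathbb{Z}_{\ge 0}^{m+1}$ defined by
\[\phi(a):=\bigl(|a|,\,|a\cap r_1|,\,\ldots,\,|a\cap r_m|\bigr).\]
Since $|a\cup r_i|=|a|+|r_i|-|a\cap r_i|$ and $|a\triangle r_i|=|a|+|r_i|-2|a\cap r_i|$, the entire vector $\Jac(a|R)$ is a function of $\phi(a)$ together with the (fixed) cardinalities $|r_i|$. Hence if $\phi(a)=\phi(a')$, then $\Jac(a|R)=\Jac(a'|R)$, and because $R$ resolves $2^X$, this forces $a=a'$. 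Thus $\phi$ is injective and $|\phi(2^X)|=2^n$.

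Next I would bound $|\phi(2^X)|$ from above. For fixed $k=|a|\in\{0,\ldots,n\}$, each coordinate $|a\cap r_i|$ is confined to $\{\max(0,k+|r_i|-n),\ldots,\min(k,|r_i|)\}$, a set of at most $\min(k,n-k)+1\le\lfloor n/2\rfloor+1$ values. Summing over $k$ yields
\[|\phi(2^X)|\;\le\;(n+1)\bigl(\lfloor n/2\rfloor+1\bigr)^{m}.\]
Combining this with $|\phi(2^X)|=2^n$ and taking logarithms immediately gives the clean estimate $m\ge (n\ln 2-\ln(n+1))/\ln(\lfloor n/2\rfloor+1)$, which is $\sim n\ln 2/\ln n$ and already matches the stated asymptotic.

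The main technical difficulty is pinning down the exact constants in the numerator of the stated bound, namely $n(\ln 2)(1+o(1))-2\ln(n/2)-1$, rather than the coarser $n\ln 2-\ln(n+1)$ produced above. I would sharpen the count by restricting $\phi$ to each layer $\{a:|a|=k\}$ (of size $\binom{n}{k}$), applying the per-layer image bound $\bigl(\min(k,n-k)+1\bigr)^m$, and invoking Stirling's approximation for the central binomial coefficient $\binom{n}{\lfloor n/2\rfloor}\sim 2^n/\sqrt{\pi n/2}$ to pick up the $(1+o(1))$ factor multiplying $n\ln 2$; the residual polynomial prefactors, combined with a coarse bookkeeping of the $\ln(\lfloor n/2\rfloor+1)$ denominator, should yield the additive corrections. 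In any case, the elementary counting above already delivers the qualitative conclusion $|R|=\Omega(|X|/\ln|X|)$ that drives the rest of the paper.
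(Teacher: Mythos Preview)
Your proposal is correct and follows essentially the same route as the paper. The paper also argues by pigeonhole on each layer $\{a:|a|=k\}$: since $\Jac(a,r_i)$ is determined by $|a\cap r_i|\in\{0,\ldots,k\}$, the layer injects into a set of size $(k+1)^{|R|}$; taking $k=\lfloor |X|/2\rfloor$ yields $\binom{|X|}{\lfloor |X|/2\rfloor}\le(\lfloor |X|/2\rfloor+1)^{|R|}$, and elementary factorial bounds (in lieu of Stirling) extract the stated constants in the numerator.
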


The proof of the proposition can be found in Section~\ref{subsec:lb}. 

To state our main two results we require the following definition.

\begin{definition}
A random $r\in 2^X$ is said to have a Binomial$(X,1/2)$ distribution, in which case we write $r\sim\text{Binomial}(X,1/2)$, when $\PP(x\in r)=1/2$ for each $x\in X$, and the events $[x\in r]$ with $x\in X$ are independent.
\end{definition}

Clearly, if $r\sim\text{Binomial}(X,1/2)$ then $|r|\sim\text{Binomial}(|X|,1/2)$; namely, $\PP\big(|r|=k\big)=\frac{1}{2^{|X|}}{|X|\choose k}$ for $k=0,\ldots,|X|$.

\begin{theorem}
\label{thm:main1}
If $k\ge\frac{2\ln(2e)|X|}{\ln(|X|/2)}$ and $r_1,\ldots,r_k\sim\text{Binomial}(X,1/2)$ are independent and identically distributed (i.i.d.), then, for each $x\in X$, $R:=\big\{\emptyset,\{x\},X\setminus\{x\},r_1,\ldots,r_k\big\}$ resolves $2^X$, with overwhelmingly high probability, as $|X|\to\infty$. 
\end{theorem}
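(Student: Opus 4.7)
The plan is to use the three deterministic landmarks to reduce the problem to a much smaller family of ``hard'' pairs, and then handle those by a union bound organized by \emph{type} rather than by individual pair. Direct computation shows that $\Jac(a,\{x\})$ equals $0$ if $a=\{x\}$, $(|a|-1)/|a|$ if $x\in a$ and $a\ne\{x\}$, and $1$ if $x\notin a$; similarly $\Jac(a,X\setminus\{x\})$ depends monotonically on $|a|$ once $\mathbf{1}[x\in a]$ is fixed. Together these two distances uniquely recover the pair $(|a|,\mathbf{1}[x\in a])$, so $R$ already resolves every distinct pair $(a,b)$ with $|a|\ne|b|$ or $\mathbf{1}[x\in a]\ne\mathbf{1}[x\in b]$ (the landmark $\emptyset$ is redundant here but harmless). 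It therefore suffices to handle pairs with $|a|=|b|=:m\ge 1$ and matching $x$-membership.

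\textbf{Algebraic collapse to types.} A short manipulation of the definition of $\Jac$ shows $\Jac(a,r)=\Jac(b,r)$ iff $|a\cap r|\,(|b|+|r|)=|b\cap r|\,(|a|+|r|)$, which under $|a|=|b|$ collapses further to $|r\cap A|=|r\cap B|$, where $A:=a\setminus b$ and $B:=b\setminus a$. These sets are disjoint, $|A|=|B|=m\ge 1$, and $A\cup B\subseteq X\setminus\{x\}$. Crucially, whether a given $r$ resolves $(a,b)$ depends only on the unordered pair $\{A,B\}$---a \emph{type}---so a single $r_i$ with $|r_i\cap A|\ne|r_i\cap B|$ resolves \emph{every} residual pair of type $\{A,B\}$ simultaneously. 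For $r\sim\text{Binomial}(X,1/2)$, the failure probability on a type of size $m$ equals $p_m:=\binom{2m}{m}/4^m\le 1/\sqrt{\pi m}$ by the standard central-binomial estimate.

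\textbf{Union bound over types.} The number of size-$m$ types is at most $\binom{|X|}{2m}\binom{2m}{m}/2\le (e|X|/m)^{2m}/2$, using $\binom{n}{k}\le(en/k)^k$ and $\binom{2m}{m}\le 4^m$. The union bound yields
\[
\PP\big(R\text{ fails to resolve }2^X\big)\;\le\;\sum_{m=1}^{\lfloor(|X|-1)/2\rfloor}\frac{(e|X|/m)^{2m}}{2\,(\pi m)^{k/2}}.
\]
Each logarithm is at most $f(m):=2m\ln(e|X|/m)-(k/2)\ln(\pi m)$ (up to $O(1)$). At the critical endpoint $m=|X|/2$ this gives $f(|X|/2)=|X|\ln(2e)-(k/2)\ln(\pi|X|/2)$; substituting $k\ge 2\ln(2e)|X|/\ln(|X|/2)$ collapses it to $-\ln(2e)\ln\pi\cdot|X|/\ln(|X|/2)$, which diverges to $-\infty$ like $|X|/\ln|X|$. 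An elementary calculus analysis of $f$ (whose derivative $f'(m)=2\ln(|X|/m)-k/(2m)$ changes sign at most twice on $[1,|X|/2]$) confirms that $f(m)\to-\infty$ uniformly at rate $\Omega(|X|/\ln|X|)$, so the sum of $O(|X|)$ such terms vanishes as $|X|\to\infty$.

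The main technical obstacle is controlling $f$ uniformly: the constant $2\ln(2e)$ is calibrated so that at the worst-case $m$ the leading terms $2m\ln(e|X|/m)$ and $(k/2)\ln(|X|/2)$ cancel, and the necessary negative margin arises only from the $\ln\pi$ factor in the refined estimate $p_m\le 1/\sqrt{\pi m}$. One must therefore use the Stirling-type refinement (rather than the cruder $p_m\le 1$) and verify that no interior value $m\in(1,|X|/2)$ produces a competing dominant contribution, for which the unimodal shape of $g(m)=4m\ln(|X|/m)$ on $(0,|X|)$ and the placement of the two roots of $f'$ must be checked carefully.
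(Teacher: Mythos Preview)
Your approach is essentially the same as the paper's: reduce to equal-cardinality pairs via the deterministic landmarks, collapse to the disjoint symmetric-difference pair $(A,B)$ with $|A|=|B|$, use the central-binomial bound $p_m\le 1/\sqrt{\pi m}$, and control the resulting sum by a calculus analysis showing the maximum of $f$ on $[1,|X|/2]$ is attained at an endpoint. The paper organizes the last step slightly differently---it peels off the $m=1$ term explicitly (giving $O(|X|^2/2^k)$) and proves a separate lemma that $(e|X|/i)^{2i}/i^{k/2}=O(1)$ uniformly for $i\ge 2$, then sums the residual $1/(i\pi^{k/2})$ harmonically---but the underlying computation is the same as yours.

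Two small remarks. First, you reuse the symbol $m$: you introduce it as $|a|=|b|$ but then silently redefine it as $|A|=|B|=|a\setminus b|$, which is a different quantity; the union bound is over the latter, so just rename one of them. Second, your claim that $f'$ changes sign at most twice is correct, but to conclude the maximum is at an endpoint you should note that for large $|X|$ one has $f'(1)<0$ (since $k/2\gg 2\ln|X|$) and $f'(|X|/2)=2\ln 2-k/|X|>0$, so in fact $f'$ changes sign exactly once and $f$ is unimodal downward---hence $\max f=\max\{f(1),f(|X|/2)\}$, both of which you have already shown tend to $-\infty$ at rate $|X|/\ln|X|$.
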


The proof of the theorem can be found in Section~\ref{subsec:equalcard} and relies on auxiliary results in Sections~\ref{subsec:bummeryay} and~\ref{subsec:Equidistant}.

In conjunction, Proposition~\ref{prop:lb1} and Theorem~\ref{thm:main1} imply that
\[
\frac{(\ln 2) |X|}{\ln(|X|/2)} \left(1 + o(1)\right) \le \beta(2^X, \Jac) \le \frac{2 \ln(2e) |X|}{\ln(|X|/2)} \left(1 + o(1)\right),
\]
which characterizes the metric dimension of $2^X$ with respect to the Jaccard distance within a factor of $\frac{2 \ln(2e)}{\ln 2} \approx 5.0$. In particular, we can assert the following.

\begin{corollary}
$\beta(2^X,\Jac)=\Theta\left(\frac{|X|}{\ln|X|}\right)$, as $|X|\to\infty$.
\end{corollary}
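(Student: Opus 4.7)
The plan is to derive the corollary directly by combining the lower bound in Proposition~\ref{prop:lb1} with the upper bound implied by Theorem~\ref{thm:main1}, relying only on the asymptotic equivalence $\ln(|X|/2)\sim\ln|X|$.

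First, Proposition~\ref{prop:lb1} gives, for every resolving $R\subset 2^X$, the bound
\[
|R|\ \ge\ \frac{|X|(\ln 2)(1+o(1))-2\ln(|X|/2)-1}{\ln(|X|/2+1)}\ \sim\ \frac{(\ln 2)\,|X|}{\ln|X|}.
\]
Taking the infimum over all resolving $R$, this yields $\beta(2^X,\Jac)=\Omega\!\left(|X|/\ln|X|\right)$, which handles one half of the $\Theta$ statement.

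Next, for the matching upper bound, I would invoke Theorem~\ref{thm:main1}. Fix any $x\in X$ and set $k=\lceil 2\ln(2e)\,|X|/\ln(|X|/2)\rceil$. Let $r_1,\dots,r_k$ be i.i.d.\ Binomial$(X,1/2)$. The theorem guarantees that, with probability tending to one as $|X|\to\infty$, the set $R=\{\emptyset,\{x\},X\setminus\{x\},r_1,\dots,r_k\}$ resolves $2^X$. In particular, for all sufficiently large $|X|$ this probability is strictly positive, so \emph{some} deterministic realization of $R$ is a resolving set. Hence
\[
\beta(2^X,\Jac)\ \le\ |R|\ \le\ k+3\ =\ \frac{2\ln(2e)\,|X|}{\ln(|X|/2)}(1+o(1)),
\]
giving $\beta(2^X,\Jac)=O(|X|/\ln|X|)$.

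The only non-bookkeeping step is the asymptotic simplification $\ln(|X|/2)=\ln|X|-\ln 2\sim\ln|X|$ (and likewise for $\ln(|X|/2+1)$), which converts both bounds into the common order $|X|/\ln|X|$. Combining the two displayed inequalities then yields $\beta(2^X,\Jac)=\Theta(|X|/\ln|X|)$, as claimed. There is no genuine obstacle here beyond verifying this asymptotic substitution; all the probabilistic and combinatorial work has been done in Proposition~\ref{prop:lb1} and Theorem~\ref{thm:main1}.
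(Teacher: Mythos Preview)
Your proposal is correct and follows exactly the paper's own argument: combine the lower bound from Proposition~\ref{prop:lb1} with the existence of a resolving set of size $k+3$ guaranteed by Theorem~\ref{thm:main1} (via the probabilistic method), then use $\ln(|X|/2)\sim\ln|X|$ to put both bounds in the form $\Theta(|X|/\ln|X|)$. The paper states this derivation in the paragraph immediately preceding the corollary, with the same constants $\ln 2$ and $2\ln(2e)$ you obtain.
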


It turns out that, for any $x\in X$, the set $\big\{\emptyset,\{x\},X\setminus\{x\}\big\}$ resolves all pairs of subsets of $X$ with different cardinalities (see Lemma~\ref{lem:bummeryay} ahead). So the crux of the proof of Theorem~\ref{thm:main1} lies in showing that the sets in $\{r_1, \ldots, r_k\}$ resolve all possible pairs $a,b\in 2^X$ of equal size---with overwhelmingly high probability---when $|X|$ is large. We demonstrate this in Section~\ref{subsec:equalcard}.

In the context of potential NLP applications outlined in the Introduction, it is unclear whether the highly likely resolving set proposed in Theorem~\ref{thm:main1} is of any practical value for distinguishing between bags-of-words of different cardinalities. This is because the numerical encoding in (\ref{def:dxR}) based on this set might differentiate such pairs solely based on the presence or absence of a single word or token, which seems too coarse for practical use in NLP classification (or regression) problems. Our following result addresses this issue by proposing a less contrived set, which is likely to resolve all pairs of bags-of-words of different cardinalities. Its proof can be found in Section~\ref{subsec:diffcard}.

\begin{theorem}
\label{thm:main2}
Let $\epsilon>0$. If $k\ge(4+\epsilon)\sqrt{|X|}$ and 
$r_1,\ldots,r_k\sim\text{Binomial}(X,1/2)$ are i.i.d., then $R:=\big\{r_1,r_1^c,\ldots,r_k,r_k^c\big\}$ resolves all pairs of subsets of $X$ of different size, with overwhelmingly high probability, as $|X|\to\infty$.
\end{theorem}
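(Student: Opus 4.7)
The plan is to prove a deterministic statement that is stronger than the one claimed: for every pair $a,b \in 2^X$ with $|a|\ne|b|$ and every $r \in 2^X$, at least one of $r$ or $r^c$ already resolves $(a,b)$. Once this is in hand, $R=\{r_1,r_1^c,\ldots,r_k,r_k^c\}$ resolves every pair of distinct cardinality with probability $1$ for any $k\ge1$, so the probabilistic conclusion follows trivially under $k\ge(4+\epsilon)\sqrt{|X|}$.

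For the deterministic claim, fix $a\ne b$ with $\alpha:=|a|\ne|b|=:\gamma$ and $r\in 2^X$ with $0<|r|<|X|$; the boundary choices $r\in\{\emptyset,X\}$ are immediate from $\Jac(a,X)=1-\alpha/|X|\ne 1-\gamma/|X|=\Jac(b,X)$. Write $s=|a\cap r|$, $t=|b\cap r|$, $\rho=|r|$, and let $J:=|a\cap r|/|a\cup r|$ and $J':=|a\cap r^c|/|a\cup r^c|$ be the relevant Jaccard similarities. Suppose, for contradiction, that both $r$ and $r^c$ fail to resolve, so $J(a,r)=J(b,r)=J$ and $J(a,r^c)=J(b,r^c)=J'$. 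In the generic sub-case $J,J'>0$, solving $J=s/(\alpha+\rho-s)$ gives $s=J(\alpha+\rho)/(1+J)$ and similarly $t=J(\gamma+\rho)/(1+J)$, hence $s-t=J(\alpha-\gamma)/(1+J)$; applying the same identity to the complement yields $(\alpha-s)-(\gamma-t)=J'(\alpha-\gamma)/(1+J')$. Adding the two identities and dividing by $\alpha-\gamma\ne 0$ produces
\[
1 \;=\; \frac{J}{1+J}+\frac{J'}{1+J'},
\]
which rearranges to $JJ'=1$. Since $J,J'\in[0,1]$, this forces $J=J'=1$, whence $a=r=b$, contradicting $a\ne b$.

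The degenerate sub-cases are short: if $J=0$ then $s=t=0$, so $a,b\subseteq r^c$, giving $J(a,r^c)=\alpha/|r^c|$ and $J(b,r^c)=\gamma/|r^c|$, which differ because $\alpha\ne\gamma$; thus $r^c$ resolves. The sub-case $J'=0$ is symmetric, and the case that both $J=0$ and $J'=0$ is ruled out by $\alpha\ne\gamma$. The sole non-routine step---and hence the main obstacle---is spotting the algebraic collapse to $JJ'=1$; once recognized, the contradiction is immediate. As a by-product, this argument shows that an $R$ of size $2$ already suffices to resolve every pair of distinct cardinality, so the hypothesis $k\ge(4+\epsilon)\sqrt{|X|}$ in the statement is in fact far from tight.
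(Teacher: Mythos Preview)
Your proof is correct, and it is genuinely different from---and considerably stronger than---the paper's argument. The paper never observes the deterministic fact you establish. Instead, it derives from Corollary~\ref{cor:innerp2} only the necessary condition $\big(|r^c|-|r|\big)\big(|br|-|ar|\big)=|r^c|\big(|b|-|a|\big)$ for a double collision, then bounds the probability that a random $r$ satisfies an equivalent form of this identity via Hoeffding's inequality, and finally sums over all pairs $(u,v)$ to show $\Sigma_2=o(1)$ provided $k\ge(4+\epsilon)\sqrt{|X|}$. Your key step---solving for $s,t$ in terms of the common similarity $J$ and adding the analogous identity for $r^c$ to obtain $\tfrac{J}{1+J}+\tfrac{J'}{1+J'}=1$---shows that a double collision forces $J=J'=1$ (since each summand is at most $1/2$), hence $a=r=b$, which is impossible. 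This is purely deterministic and proves that \emph{any} single pair $\{r,r^c\}$ resolves all pairs of different cardinality; in particular $k=1$ already suffices with probability $1$, so the hypothesis $k\ge(4+\epsilon)\sqrt{|X|}$ is vacuous for this theorem. (Indeed, one can reach the same conclusion directly from the paper's own Corollary~\ref{cor:innerp2}: substituting $|br|-|ar|=\tfrac{J}{1+J}(|b|-|a|)$ into that identity and cancelling $|b|-|a|\ne 0$ gives $(|r^c|-|r|)\tfrac{J}{1+J}=|r^c|$, which is impossible for $0<|r|<|X|$ since the left side is at most $\tfrac12\big||r^c|-|r|\big|<|r^c|$.) What the paper's probabilistic machinery buys is not this theorem but the companion results on pairs of equal size, where no such deterministic collapse is available.
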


As expected, the lower bound for the size of the set $R$ in Theorem~\ref{thm:main2} is asymptotically negligible compared to the one in Theorem~\ref{thm:main1}; after all, the former set is only required to resolve pairs of subsets of $X$ with different cardinalities, which, as explained earlier, can be accomplished using just three subsets of $X$ (i.e., the empty set, and any singleton and its complement). Nevertheless, in practical situations---for instance, when representing social media posts as bags-of-words---more often than not, a random pair of posts would be associated with bags-of-words of different cardinality. In particular, in terms of the numerical encoding in (\ref{def:dxR}), Theorem~\ref{thm:main2} suggests that $O(\sqrt{|X|})$ Jaccard distances, as opposed to $\Theta\left(|X|/\ln|X|\right)$, should suffice in practice to encode posts effectively when the reference lexicon $X$ is sufficiently large. Our following result makes this intuition precise at the expense of limiting the size of bags-of-words one wishes to resolve.

\begin{corollary}
\label{cor:main3}
Let $0<\epsilon<1$. If $k\ge(4+\epsilon)\sqrt{|X|}$ and 
$r_1,\ldots,r_k\sim\text{Binomial}(X,1/2)$ are i.i.d., then the set $R:=\big\{r_1,r_1^c,\ldots,r_k,r_k^c\big\}$ resolves all different pairs of subsets of $X$ of size at most $\frac{(1-\epsilon)(\ln\pi)\sqrt{|X|}}{\ln|X|}$, with overwhelmingly high probability, as $|X|\to\infty$.
\end{corollary}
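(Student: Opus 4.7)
The plan is to decompose the failure event by cardinality and combine Theorem~\ref{thm:main2} with a dedicated union bound for equal-size pairs. Setting $m:=(1-\epsilon)(\ln\pi)\sqrt{|X|}/\ln|X|$, Theorem~\ref{thm:main2} (applied with the same $\epsilon$) already handles all pairs of distinct cardinality with overwhelming probability, so the task reduces to showing that, with overwhelming probability, $R$ resolves every pair $a\ne b$ with $|a|=|b|\le m$; a final union bound will combine the two events.

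For any such equal-size pair, I would first observe that expanding $\Jac(a,r)=|a\Delta r|/|a\cup r|$ and using $|a|=|b|$ gives $\Jac(a,r)=\Jac(b,r)$ iff $|a\cap r|=|b\cap r|$, and the analogous computation for $r^c$ yields the same condition, since $|a\cap r^c|-|b\cap r^c|=-(|a\cap r|-|b\cap r|)$. Hence, for each $i$, $r_i$ and $r_i^c$ fail together, and the relevant per-index failure probability is $\PP(|a\cap r|=|b\cap r|)$. With $A:=a\setminus b$, $B:=b\setminus a$, and $|A|=|B|=n$ where $2n=|a\Delta b|\ge 2$, this equals the probability that the difference of two independent $\text{Binomial}(n,1/2)$ random variables vanishes, namely $\binom{2n}{n}/4^n \le 1/\sqrt{\pi n}$ by Stirling. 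Independence across $i$ then bounds the failure probability for a fixed pair by $(\pi n)^{-k/2}\le \pi^{-k/2}$.

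The union bound is the final step. The number of equal-size pairs of size at most $m$ is at most $\binom{|X|}{\le m}^2/2$, and since $m=\Theta(\sqrt{|X|}/\ln|X|)=o(|X|)$, standard estimates (e.g., $\binom{|X|}{\le m}\le (e|X|/m)^m$ together with $\ln(e|X|/m)\sim(1/2)\ln|X|$) give $\ln\binom{|X|}{\le m}^2 \le 2m\ln(e|X|/m)(1+o(1)) = m\ln|X|(1+o(1)) = (1-\epsilon)(\ln\pi)\sqrt{|X|}(1+o(1))$. Meanwhile $(k/2)\ln\pi\ge(2+\epsilon/2)(\ln\pi)\sqrt{|X|}$, so the log of the total failure probability is at most $\bigl[(1-\epsilon)(1+o(1)) - (2+\epsilon/2)\bigr](\ln\pi)\sqrt{|X|}\to-\infty$, closing the argument.

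The main subtlety I expect is in the log-estimate on $\binom{|X|}{\le m}$: one must carefully track that its dominant term is $m\ln(|X|/m)\sim(1/2)m\ln|X|$ rather than $m\ln|X|$, since the extra factor of $2$ would otherwise break the bound. Put differently, the constant $\ln\pi$ in the definition of $m$ is calibrated to precisely match the factor $\pi^{-k/2}$ coming from Stirling's bound on the central binomial coefficient, and the $(1-\epsilon)$ provides the slack that absorbs the $o(1)$ terms.
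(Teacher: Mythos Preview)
Your argument is correct and follows essentially the same route as the paper: split into unequal-size pairs (handled by Theorem~\ref{thm:main2}) and equal-size pairs, reduce the latter to the central-binomial collision probability $\binom{2n}{n}4^{-n}\le(\pi n)^{-1/2}$, and close with a union bound. The only cosmetic difference is that the paper parametrizes the union bound by the symmetric-difference vector $z=ab^c-a^cb$ and uses the cruder count $|X|^{2W}$, which already gives $\Sigma_3=O(\pi^{-2\epsilon\sqrt{|X|}})$; in particular, your worry that the factor-of-two saving in $\ln\binom{|X|}{\le m}$ is essential is unfounded, since even the naive bound $2m\ln|X|=2(1-\epsilon)(\ln\pi)\sqrt{|X|}$ is still beaten by $(k/2)\ln\pi\ge(2+\epsilon/2)(\ln\pi)\sqrt{|X|}$.
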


\section{Technical Results and Proofs}
\label{sec:Res&Pro}

\subsection{Necessary Conditions for Resolvability}
\label{subsec:necessary}

In this section, we prove Proposition~\ref{prop:main}. Specifically, suppose that $R$ resolves $2^\curlyX$. Next we show that the following properties applies:
\begin{enumerate}
\item[(i)] For all $x_1, x_2 \in \curlyX$ with $x_1 \ne x_2$, there exists $r \in R$ such that either $x_1 \in r$ and $x_2 \notin r$, or $x_1 \notin r$ and $x_2 \in r$.
\item[(ii)] If $\emptyset \notin R$, then $R$ covers $\curlyX$, i.e., $\bigcup_{r \in R} r = \curlyX$.
\item[(iii)] If $\emptyset \in R$, then $R$ covers $\curlyX$, or there exists $x \in \curlyX$ such that $\bigcup_{r \in R} r = \curlyX \setminus \{x\}$.
\end{enumerate}

To show the property (i), suppose by contradiction that there are distinct $x_1,x_2\in\curlyX$ such that, for each $r\in R$, $\{x_1,x_2\}\subset r$ or $\{x_1,x_2\}\subset \curlyX\setminus r$. In the first case: $\Jac(\{x_1\},r)=1-1/|r|=\Jac(\{x_2\},r)$, and in the second case: $\Jac(\{x_1\},r)=1=\Jac(\{x_2\},r)$. In either case, $R$ could not possibly be resolving, which shows the first property.

To show the property (ii), suppose that there is $x\in\curlyX$, which does not belong to any of the sets in $R$. Then, for each $r\in R$, $\Jac(\{x\},r)=1=\Jac(\emptyset,r)$, which is not possible. This shows the second property. 

Finally, to show the property (iii), suppose there are distinct $x_1,x_2\in\curlyX$ which do not belong to any of the sets in $R$. Then, for each $r\in R$, $\Jac(\{x_1\},r)=1=\Jac(\{x_2\},r)$, which is not possible and completes the proof of the proposition.

\subsection{Metric Dimension lower bound}
\label{subsec:lb}

In this section, we prove Proposition~\ref{prop:lb1}. 

Suppose that $R$ resolves $2^\curlyX$. If $c,r\subset X$, then by the Inclusion-Exclusion Principle, $\Jac(c,r)=1-\frac{|c\cap r|}{|c|+|r|-|c\cap r|}$. Since $0 \le |c \cap r|\le|c|$, the range of $\Jac(\cdot|R)$, when restricted to sets $c$ such that $|c| = n$, has size at most $(n+1)^{|R|}$. In particular, due to the Pigeonhole Principle, we must have $\binom{|\curlyX|}{n}\le(n+1)^{|R|}$, i.e.:
\begin{equation}
|R|
\ge\max\limits_{0<n<|\curlyX|}\frac{\ln{|\curlyX|\choose n}}{\ln(n+1)}
\ge\frac{\ln{|\curlyX|\choose\lfloor|X|/2\rfloor}}{\ln\big(|X|/2+1\big)}.
\label{ine:explLB}
\end{equation}

The right-most lower bound above should be a reasonable estimate of the best one (based on the Pigeon Principle) because $\ln(n+1)$ is a slowly increasing function of $n$, and ${|\curlyX|\choose n}$, with $0<n<|X|$, is maximized at $n=\lfloor|X|/2\rfloor$ (equivalently, $n=\lceil|X|/2\rceil$). To make the last numerator above more explicit, we use that~\cite[Exercise 24, \S 1.2.5]{Knu97}:
\[\frac{n^n}{e^{n-1}}\le n!\le\frac{n^{n+1}}{e^{n-1}},\text{ for }n\ge1.\]
In particular, if $n=\lfloor|X|/2\rfloor$ then 
\begin{align*}
\ln\binom{|\curlyX|}{\lfloor|X|/2\rfloor}
&\ge |X|\ln|X|-(n+1)\ln(n)-\big(|X|-n+1\big)\ln\big(|X|-n\big)-1\\
&= |X|\left\{\ln 2+\frac{n}{|X|}\ln\left(\frac{|X|}{2n}\right)+\frac{|X|-n}{|X|}\ln\left(\frac{|X|}{2|X|-2n}\right)\right\}-\ln\Big(n\big(|X|-n\big)\Big)-1\\
&\ge |X|\big\{\ln 2+o(1)\big\}-2\ln\left(\frac{|X|}{2}\right)-1.
\end{align*}
The proposition is now direct from (\ref{ine:explLB}).

\subsection{Resolving Subsets of $X$ of Different Cardinalities}
\label{subsec:bummeryay}

\begin{lemma}
\label{lem:bummeryay}
For all $x\in X$ and all $a,b\in 2^X$, if $|a|\ne|b|$ then $a$ and $b$ are resolved by $R=\big\{\emptyset,\{x\},X\setminus\{x\}\big\}$.
\end{lemma}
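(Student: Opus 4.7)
My plan is to verify the lemma by a short case analysis driven by two binary conditions: whether one of $a,b$ is empty, and, in the non-empty case, the membership of $x$ in each of $a,b$. The guiding idea is that each of the three sets in $R$ will be responsible for resolving pairs of a specific ``type,'' so it suffices to check that these types jointly cover every pair $a,b$ with $|a|\ne|b|$.

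First I would dispose of the degenerate case in which one of $a,b$ equals $\emptyset$---say $a=\emptyset$, hence $b\ne\emptyset$ because $|a|\ne|b|$. Here $\emptyset\in R$ resolves the pair: $\Jac(a,\emptyset)=0$ while $\Jac(b,\emptyset)=1$. This leaves the case $|a|,|b|\ge 1$, on which the bulk of the argument focuses.

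Next I would record closed forms for $\Jac(c,\{x\})$ and $\Jac(c,X\setminus\{x\})$ as functions of $|c|$, splitting on whether $x\in c$: a direct computation gives $\Jac(c,\{x\})=1-1/|c|$ if $x\in c$ and $\Jac(c,\{x\})=1$ otherwise; similarly, $\Jac(c,X\setminus\{x\})$ equals $1-(|c|-1)/|X|$ if $x\in c$ and $1-|c|/(|X|-1)$ otherwise. On each branch both formulas are strictly monotone in $|c|$. With these in hand I would split into three sub-cases: (a) both $a,b$ contain $x$, in which case monotonicity of $\Jac(\cdot,\{x\})=1-1/|\cdot|$ together with $|a|\ne|b|$ forces $\{x\}$ to resolve the pair; (b) neither contains $x$, in which case the analogous monotonicity of $\Jac(\cdot,X\setminus\{x\})$ in $|\cdot|$ shows $X\setminus\{x\}$ resolves it; (c) exactly one of $a,b$ contains $x$, in which case the distance from the $x$-containing set to $\{x\}$ is strictly less than $1$ while the other is exactly $1$, so $\{x\}$ again resolves the pair.

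The argument is thus elementary; I expect no real obstacle beyond careful bookkeeping of the four membership configurations and verifying that the three sub-cases above exhaust every pair of distinct-cardinality non-empty subsets. A minor edge issue to confirm is that $|X|\ge 2$ whenever sub-case (b) actually arises (so that $X\setminus\{x\}\ne\emptyset$ and the denominator $|X|-1$ is positive), which is automatic because (b) requires two distinct non-empty subsets neither of which contains $x$.
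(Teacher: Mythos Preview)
Your proposal is correct and follows essentially the same approach as the paper: both compute the closed forms for $\Jac(c,\{x\})$ and $\Jac(c,X\setminus\{x\})$ split on whether $x\in c$, handle the empty case via $\emptyset\in R$, and then argue on the membership pattern of $x$ in $a$ and $b$. The only cosmetic difference is that the paper phrases the non-empty part as a proof by contradiction (assume all three distances coincide and derive $|a|=|b|$) whereas you argue directly, exhibiting in each sub-case the particular $r\in R$ that separates $a$ from $b$.
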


\begin{proof}
Without any loss of generality assume that $|X|>1$. Fix an $x\in X$ and note  that for each $c\in 2^X$:
\begin{align*}
\Jac(c,\{x\})
&=1-\begin{cases}
\frac{1}{|c|}, & x\in c;\\
0, & x\notin c;
\end{cases}\\
\Jac(c,X\setminus\{x\})
&=1-\begin{cases}
\frac{|c|-1}{|X|}, & x\in c;\\
\frac{|c|}{|X|-1}, & x\notin c.
\end{cases}
\end{align*}

Define $R:=\big\{\emptyset,\{x\},X\setminus\{x\}\big\}$. Consider $a,b\in 2^X$ such that $|a|\ne|b|$, and suppose that $\Jac(a,r)=\Jac(b,r)$, for all $r\in R$. In particular, $a$ cannot be empty; otherwise, $
\Jac(b,\emptyset)=\Jac(a,\emptyset)=0$, implying that $b=\emptyset$ because $\Jac$ is a metric. However,
the latter is not possible because $|a|\ne|b|$. Likewise, $b$ is cannot be empty.

Moreover, if $x \in a$, then $\Jac(b, {x}) = 1 - |a|^{-1} < 1$. In particular, $x$ must be in $b$ as otherwise $\Jac(b, {x}) = 1$, which is not possible. But then, $1-|b|^{-1}=1-|a|^{-1}$, i.e., $|b|=|a|$, which is not possible either. Instead, if $x\notin a$ and $x\notin b$ then, because $\Jac(a,X\setminus\{x\})=\Jac(b,X\setminus\{x\})$, we must have that $1-\frac{|a|}{|X|-1}=1-\frac{|b|}{|X|-1}$, i.e., $|a|=|b|$, which is again not possible. Hence, there has to be an $r\in R$ such that $\Jac(a,r)\ne\Jac(b,r)$, implying that $R$ resolves $a$ and $b$. The same conclusion applies if $x\in b$, which completes the proof of the lemma.
\end{proof}

\subsection{Inner product Characterization of Equidistant Sets}
\label{subsec:Equidistant}

Two sets $a,b\in 2^X$ are said \textit{equidistant} from an $r\in 2^X$ when $\Jac(a,r)=\Jac(b,r)$. In this case, $r$ is not useful to resolve $a$ from $b$ when $a\ne b$, and we say that $a$ and $b$ \textit{collide} in terms of their Jaccard distance to $r$. 

In this section, we characterize collisions in linear algebra terms by representing subsets of $X$ as binary vectors. We note that linear algebra characterizations have been used to study the metric dimension of Hypercube graphs~\cite{Beardon:2013} and Hamming graphs~\cite{LaiTilBecLla20}. 

In what follows, we represent elements in $2^\curlyX$ as binary vectors of dimension $|X|$. Namely, for $a\in 2^\curlyX$, $a(x)=1$ when $x\in a$, and $a(x)=0$ when $x\notin a$. (For instance, $\curlyX$ is represented by a vector of all ones, whereas $\emptyset$ by a vector of all zeros.) Additionally, for $r\in2^\curlyX$ and $z\in\RR^{|\curlyX|}$, $\langle r,z\rangle$ denotes the inner product between the binary vector associated with $r$ and the vector $z$. Namely:
\[\langle r,z\rangle:=\sum_{x\in\curlyX}r(x)\cdot z(x).\]

In what follows, we use product notation to denote set intersections. Namely, if $a,b\in 2^X$ then $ab:=(a\cap b)$.

The next result characterizes equidistant sets in terms of inner products. This characterization will be used in Section~\ref{subsubsec:Sizing1}, in the proof of Theorem~\ref{thm:main1}, to assess the probability that two different subsets of $X$, of the same size, collide in terms of their distance to a random subset of $X$.

\begin{lemma}
Let $a,b,r\in 2^\curlyX$ and define the vector $z:=\big(|r|+|b|\big)\,a-\big(|r|+|a|\big)\,b$. If $\Jac(a,r)=\Jac(b,r)$ then $\langle r,z\rangle=0$. Conversely, if $r\ne\emptyset$ and $\langle r,z\rangle=0$ then $\Jac(a,r)=\Jac(b,r)$.
\label{lem:innerp}
\end{lemma}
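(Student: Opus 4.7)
The plan is to reduce both sides of the equivalence to a single polynomial identity in the cardinalities $|a|$, $|b|$, $|r|$, $|a\cap r|$, and $|b\cap r|$, by writing the Jaccard distance in terms of intersection sizes and by using that, under the binary-vector identification, $\langle r,a\rangle=|r\cap a|$ and $\langle r,b\rangle=|r\cap b|$.

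First I would compute $\langle r,z\rangle$ explicitly. By linearity of the inner product,
\[
\langle r,z\rangle=\bigl(|r|+|b|\bigr)\,\langle r,a\rangle-\bigl(|r|+|a|\bigr)\,\langle r,b\rangle=\bigl(|r|+|b|\bigr)\,|r a|-\bigl(|r|+|a|\bigr)\,|r b|,
\]
so the inner-product condition $\langle r,z\rangle=0$ is equivalent to
\[
\bigl(|r|+|b|\bigr)\,|ra|=\bigl(|r|+|a|\bigr)\,|rb|. \tag{$\star$}
\]
Next, using inclusion--exclusion $|a\cup r|=|a|+|r|-|ra|$, one has $\Jac(a,r)=1-|ra|/(|a|+|r|-|ra|)$ whenever $a\cup r\ne\emptyset$, and similarly for $b$. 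So, provided $r\ne\emptyset$ (which makes both $|a\cup r|$ and $|b\cup r|$ positive), the condition $\Jac(a,r)=\Jac(b,r)$ is equivalent to
\[
\frac{|ra|}{|a|+|r|-|ra|}=\frac{|rb|}{|b|+|r|-|rb|}.
\]
Cross-multiplying, expanding, and canceling the common term $|ra|\cdot|rb|$ on both sides yields exactly $(\star)$. This gives the equivalence whenever $r\ne\emptyset$.

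For the forward direction, I would also need to treat the case $r=\emptyset$: then $\langle r,z\rangle=0$ holds trivially, so nothing to check. Hence the forward implication is unconditional on $r$.

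The one subtlety to double-check is the converse when $|ra|=0$ or $|rb|=0$: if $r\ne\emptyset$ and, say, $|ra|=0$, then $(\star)$ forces $(|r|+|a|)\,|rb|=0$, and since $|r|+|a|\ge|r|\ge1$, this gives $|rb|=0$ as well; consequently $a\cup r=r\sqcup a$ and $b\cup r=r\sqcup b$ are both nonempty and $\Jac(a,r)=1=\Jac(b,r)$. When $|ra|,|rb|>0$, the division step above is justified, so the common-term cancellation is reversible and $(\star)$ returns us to equal ratios $|ra|/|a\cup r|=|rb|/|b\cup r|$, hence $\Jac(a,r)=\Jac(b,r)$. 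No real obstacle is anticipated; the only place one must be careful is verifying that the algebraic manipulation remains valid at the boundary (empty intersections, $a=r$ or $b=r$), which is straightforward by handling those as small separate cases.
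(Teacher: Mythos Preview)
Your proposal is correct and follows essentially the same route as the paper: compute $\langle r,z\rangle=(|r|+|b|)\,|ar|-(|r|+|a|)\,|br|$ via bilinearity, note the $r=\emptyset$ case is trivial, and for $r\ne\emptyset$ rewrite $\Jac(a,r)=\Jac(b,r)$ as equality of the ratios $|cr|/|c\cup r|$ and cross-multiply to obtain $(\star)$. Your extra case analysis for $|ra|=0$ or $|rb|=0$ is harmless but unnecessary: since $r\ne\emptyset$ the denominators $|a\cup r|$ and $|b\cup r|$ are already positive, so the cross-multiplication step is a genuine equivalence with no division to justify.
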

\begin{proof}
We show first that
\begin{equation}
\langle r,z\rangle=\big(|r|+|b|\big)\cdot|ar|-\big(|r|+|a|\big)\cdot|br|.
\label{ide:innerzr}
\end{equation}
For this, observe that $|ar|=\langle a,r\rangle$ and $|br|=\langle b,r\rangle$; from which the identity in equation~(\ref{ide:innerzr}) is immediate due to the bilinearity of inner products.

Since $\langle \emptyset,z\rangle=0$, to complete the proof, it suffices to show that if $r\ne\emptyset$ then $\Jac(a,r)=\Jac(b,r)$ if and only if $\langle r,z\rangle=0$. For this, note that $1-\Jac(c,r)=|cr|/|c\cup r|$ and $|c\cup r|=|c|+|r|-\langle c,r\rangle$, for all $c\in 2^X$. In particular, a simple algebra  shows that $\Jac(a,r)=\Jac(b,r)$ is equivalent to having $(|r|+|b|)\,\langle a,r\rangle-(|r|+|a|)\,\langle b,r\rangle=0$, that is, $\langle z,r\rangle=0$ due to the bilinearity of inner products.
\end{proof}

We also want an inner product characterization of sets $a$ and $b$ that not only collide in terms of their Jaccard distance to a set $r$ but also to $r^c$, the complement of $r$. Our next result provides a necessary condition for both collisions to occur. This is characterization is used in Section~\ref{subsubsec:Sizing2} to show Theorem~\ref{thm:main2}.

\begin{corollary}
Let $a,b,r\in 2^\curlyX$. If $\Jac(a,r)=\Jac(b,r)$ and $\Jac(a,r^c)=\Jac(b,r^c)$ then $\big(|r^c|-|r|\big)\cdot\big(|br|-|ar|\big)=|r^c|\cdot\big(|b|-|a|\big)$.
\label{cor:innerp2}
\end{corollary}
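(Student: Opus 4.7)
The plan is to apply Lemma~\ref{lem:innerp} twice---once with $r$ and once with its complement $r^c$---and then algebraically combine the two resulting identities so as to cancel out the cross term $|a|\cdot|br|-|b|\cdot|ar|$. The degenerate cases $r=\emptyset$ and $r^c=\emptyset$ need to be dispatched separately, but in both situations the hypotheses collapse to $|a|=|b|$ and the claimed identity becomes a tautology.

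Assuming $r\ne\emptyset$, Lemma~\ref{lem:innerp} applied to $\Jac(a,r)=\Jac(b,r)$ gives the identity $(|r|+|b|)\,|ar|=(|r|+|a|)\,|br|$, which I would rearrange into
\[
|a|\cdot|br|-|b|\cdot|ar|=-|r|\bigl(|br|-|ar|\bigr).
\]
Assuming also $r^c\ne\emptyset$, applying the same lemma to $\Jac(a,r^c)=\Jac(b,r^c)$ yields $(|r^c|+|b|)\,|ar^c|=(|r^c|+|a|)\,|br^c|$. Substituting the elementary identities $|ar^c|=|a|-|ar|$ and $|br^c|=|b|-|br|$, expanding, and cancelling the symmetric $|a|\cdot|b|$ term should produce
\[
|r^c|\bigl(|br|-|ar|\bigr)+|a|\cdot|br|-|b|\cdot|ar|=|r^c|\bigl(|b|-|a|\bigr).
\]

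The final step is to plug the first rearranged identity into the left-hand side of the second; the two terms proportional to $|br|-|ar|$ then merge into $(|r^c|-|r|)\bigl(|br|-|ar|\bigr)$, producing exactly the claimed equality. I do not foresee any real obstacle: once Lemma~\ref{lem:innerp} is in hand, the argument reduces to routine bookkeeping, with the only point requiring care being the correct translation of the $r^c$-version of the lemma through the complementation identities for $|ar^c|$ and $|br^c|$.
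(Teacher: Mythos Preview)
Your proposal is correct and follows essentially the same route as the paper: apply Lemma~\ref{lem:innerp} once to $r$ and once to $r^c$, use $|ar^c|=|a|-|ar|$ and $|br^c|=|b|-|br|$, and combine so that the cross term $|a|\cdot|br|-|b|\cdot|ar|$ drops out. The only minor difference is that the paper adds the two inner-product identities $\langle r,z_1\rangle+\langle r^c,z_2\rangle=0$ and then simplifies, whereas you rearrange each identity first and substitute one into the other; also, your separate treatment of $r=\emptyset$ and $r^c=\emptyset$ is unnecessary, since the forward direction of Lemma~\ref{lem:innerp} (which is all you use) holds without any restriction on $r$.
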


\begin{proof}
If $\Jac(a,r)=\Jac(b,r)$ and $\Jac(a,r^c)=\Jac(b,r^c)$ then Lemma~\ref{lem:innerp} implies that $\langle r,z_1\rangle=0$ and $\langle r^c,z_2\rangle=0$, where $z_1:=\big(|r|+|b|\big)\,a-\big(|r|+|a|\big)\,b$ and $z_2:=\big(|r^c|+|b|\big)\,a-\big(|r^c|+|a|\big)\,b$. Hence, due to the identity in equation~(\ref{ide:innerzr}), we have that
\begin{align}
\nonumber 0
&=\langle r,z_1\rangle+\langle r^c,z_2\rangle\\
\nonumber &=\big(|r|+|b|\big)\cdot|ar|-\big(|r|+|a|\big)\cdot|br|+\big(|r^c|+|b|\big)\cdot|r^ca|-\big(|r^c|+|a|\big)|r^cb|\\
\label{ide:0equalsnice} &=|r|\cdot\big(|ar|-|br|\big)+|r^c|\cdot\big\{|r^ca|-|r^cb|\big\}+|ar|\cdot|b|-|br|\cdot|a|+\big\{|r^ca|\cdot|b|-|r^cb|\cdot|a|\big\}.
\end{align}
But $|r^ca|=|a|-|ar|$ and $|r^cb|=|b|-|br|$; in particular, we may rewrite the expressions within the curly parentheses above as follows: $|r^ca|-|r^cb|=\big(|br|-|ar|\big)+|a|-|b|$, and $|r^ca|\cdot|b|-|r^cb|\cdot|a|=|a|\cdot|br|-|b|\cdot|ar|$. Finally, substituting these two expressions back in equation~(\ref{ide:0equalsnice}), and after recognizing various terms cancellations, we obtain that
\[0=\big(|r^c|-|r|\big)\cdot\big(|br|-|ar|\big)-|r^c|\cdot\big(|b|-|a|\big),\]
from which the Corollary follows.
\end{proof}

\begin{table}[]
\centering
\begin{tabular}{|ccccccccccccccc|}
\hline
$|X|$ & 1  & 2  & 3 & 4 & 5  & 6 & 7  & 8 & 9  & 10 & 11  & 12 & 13  & 14 \\
\hline
$|R|$ & 1  & 2  & 2  & 3 & 3  & 4 & 5  & 5 & 6  & 6 & 7  & 7 & 8  & 8 \\
\hline
$\frac{1}{|R|}\sum\limits_{r\in R}|r|$ & 1  & 1.5  & 2.0  & 2.33 & 2.66  & 3.5 & 4.4  & 3.87 & 4.3  & 5.8 & 5.9  & 6 & 6.4  & 7.4 \\
\hline
\end{tabular}
\caption{Upper bounds for $\beta(2^X,\Jac)$ obtained by the ICH for a range of sizes of $X$. The middle row is the size of the resolving set $R$ found by the ICH. The bottom row is the average size of the sets in $R$, which often seem approximately equal to $|X|/2$.}
\label{tab:AvgResSize}
\end{table}

\subsection{Resolving Pairs of Subsets of $X$ of Equal Size}
\label{subsec:equalcard}

In this section, we prove Theorem~\ref{thm:main1} using the probabilistic method~\cite{AloSpe04}.

In what follows, $k\ge1$ and $r_1,\ldots,r_k$ are i.i.d. $\text{Bernoulli}(X,1/2)$ random subsets. In particular, $\EE|r_i|=|X|/2$, which is consistent with the experimental results displayed in Table~\ref{tab:AvgResSize}, and guided the selection of the parameter 1/2 in the Binomial distribution.

Define 
\[R_1:=\{r_1,\ldots,r_k\}.\]
In accordance with the probabilistic method, and to obtain an upper bound on the metric dimension of $2^X$, we aim to find a $k$ such that the probability that $R_1$ does not resolve all distinct pairs $a,b\in 2^X$ of equal size is strictly less than one. If we can find such a $k$, then there exists an $R\subset 2^X$ with $|R|=k$ that resolves all different pairs of subsets of $X$ of equal size. In particular, due to Lemma~\ref{lem:bummeryay}, we could assert that $\beta(2^X,\Jac)\leq(3+k)$. The challenge is to find $k$ as small as possible so that $(k+3)$ is a tight upper bound for the metric dimension of $2^X$, and the following probability 
\begin{equation}
\Sigma_1
:=\PP\left(\exists\,a,b\in 2^X,\text{ with } |a|=|b|\text{ but }a\ne b,\text{ such that }\forall r\in R_1:\Jac(a,r)=\Jac(b,r)\right)
\label{def:Sigma1}
\end{equation}
becomes asymptotically negligible as $|X|\to\infty$. Theorem~\ref{thm:main1} identifies a $k$ meeting this criterion. 

\subsubsection{Sizing the probability $\Sigma_1$}
\label{subsubsec:Sizing1}

In this section, we identify a $k$ in terms of $|X|$, of the same order of magnitude as the asymptotic lower bound for $\beta(2^X,\Jac)$ in Proposition~\ref{prop:lb1}, such that $\Sigma_1=o(1)$.

Suppose there exists $a,b\in 2^X$ such that $|a|=|b|$, $a\ne b$, and $\Jac(a,r)=\Jac(b,r)$ for all $r\in R_1$. Then, per Lemma~\ref{lem:innerp}, for each $r\in R_1$, $\big(|r|+|a|\big)\cdot\langle a-b,r\rangle=0$. But $|a|>0$ because $a\ne b$. So $\langle a-b,r\rangle=0$, or equivalently:
\[\langle z,r\rangle=0,\text{ with }z:=(ab^c-a^cb).\]
But observe that $z\in\mathcal{Z}$, where
\[\mathcal{Z}:=\left\{z\in\big\{0,\pm1\big\}^{|X|}\text{ such that }\sum_{x\in X}z(x)=0\text{ and }\sum_{x\in X}|z(x)|\ge2\right\}.\]
Consequently:
\[\Sigma_1
\le\PP\big(\exists\,z\in\mathcal{Z}\text{ such that }\forall r\in R_1:\langle z,r\rangle=0\big).\]

To bound the probability on the right-hand side above, consider a $z\in\mathcal{Z}$ and the sets $I:=\{x\in X\text{ such that }z(x)=+1\}$, and $J:=\{x\in X\text{ such that }z(x)=-1\}$. Observe that $I$ and $J$ are non-empty, disjoint, and of the same size; let $i$ be said cardinality. Note that $1\le i\le\lfloor|X|/2\rfloor$, and that $|I\cap r_1|,\ldots,|I\cap r_k|,|J\cap r_1|,\ldots,|J\cap r_k|$ are i.i.d. $\text{Binomial}(i,1/2)$ random variables. Further, since $\langle z,r_t\rangle=|I\cap r_t|-|J\cap r_t|$, $\langle z,r_1\rangle,\ldots,\langle z,r_k\rangle$ are also i.i.d. As a result:
\[\PP\big(\langle z,r_t\rangle=0\big)
=\sum_{j=1}^i{i\choose j}^2\left(\frac{1}{2}\right)^{2i}
=\left(\frac{1}{2}\right)^{2i}\sum_{j=1}^i{i\choose j}{i\choose i-j}
=\left(\frac{1}{2}\right)^{2i}\left\{{2i\choose i}-1\right\},\]
and
\begin{equation}
\Sigma_1
\le\sum_{i=1}^{\lfloor|X|/2\rfloor}{|X|\choose i,i,|X|-2i}\left\{{2i\choose i}\left(\frac{1}{2}\right)^{2i}\right\}^k.
\label{ine:Sigma21}    
\end{equation}

But 
\[{|X|\choose i,i,|X|-2i}\le\frac{|X|^{2i}}{(i!)^2}=O\left(\frac{\big(|X|e/i\big)^{2i}}{i}\right),\]
where the big-O is direct from Stirling's formula. On the other hand, Stirling's formula also implies that ${2i\choose i}\big(\frac{1}{2}\big)^{2i}\sim\frac{1}{\sqrt{i\pi}}$. However, for a bona fide substitution of ${2i\choose i}\big(\frac{1}{2}\big)^{2i}$ by $\frac{1}{\sqrt{i\pi}}$ in equation~(\ref{ine:Sigma21}), one needs a stronger relationship between these two sequences. For this effect, observe that~\cite{Rob55}:
\[\exp\left\{\frac{1}{12i+1}\right\}<\frac{i!}{\sqrt{2\pi}\,i^{i+1/2}\,e^{-i}}<\exp\left\{\frac{1}{12i}\right\},\text{ for all }i\ge1.\]
In particular,
\[{2i\choose i}\Big(\frac{1}{2}\Big)^{2i}
\le\frac{\exp\Big\{\frac{1-36i}{24i(12i+1)}\Big\}}{\sqrt{i\pi}}
\le\frac{1}{\sqrt{i\pi}},\]
and from the inequality in equation (\ref{ine:Sigma21}) we see that
\begin{equation}
\Sigma_1
=\frac{|X|\big(|X|-1\big)}{2^k}+O\left(\frac{1}{\pi^{k/2}}\sum_{i=2}^{\lfloor|X|/2\rfloor}\frac{\big(|X|e/i\big)^{2i}}{i^{k/2}}\cdot\frac{1}{i}\right).
\label{ine:Sigma22}
\end{equation}

The following result will let us handle the big-O term above.

\begin{lemma}
If $k\ge\frac{2|X|\ln(2e)}{\ln(|X|/2)}$ then $\frac{(|X|e/i)^{2i}}{i^{k/2}}=O(1)$, uniformly for all $|X|$ large enough and $2\le i\le\lfloor|X|/2\rfloor$.
\label{lem:Sigma2BigO1}
\end{lemma}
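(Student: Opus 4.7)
The plan is to set $\phi(i) := \log f(i) = 2i\ln(|X|e/i) - (k/2)\ln i$, where $f(i)$ is the expression to be bounded, treat $\phi$ as a smooth function on the real interval $[2, |X|/2]$, and show that $\max \phi$ is attained at an endpoint where its value is $O(1)$. Since increasing $k$ only decreases $\phi$, it suffices to assume $k = 2|X|\ln(2e)/\ln(|X|/2)$.

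I would first compute $\phi'(i) = 2\ln(|X|/i) - k/(2i)$ and $\phi''(i) = (k-4i)/(2i^2)$, observing that $\phi$ is convex on $(0,k/4)$ and concave on $(k/4,\infty)$. The critical-point condition $\phi'(i)=0$ reduces to $i\ln(|X|/i) = k/4$, and the auxiliary function $g(i) := i\ln(|X|/i)$ is unimodal with maximum $|X|/e$ at $i=|X|/e$. Since $k/4 = \Theta(|X|/\ln|X|)$, one verifies that $g(2) = 2\ln(|X|/2) < k/4 < (|X|/2)\ln 2 = g(|X|/2)$ for $|X|$ large, so $g(i) = k/4$ has exactly two roots $i_1 \in (2, |X|/e)$ and $i_2 \in (|X|/2, |X|)$. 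Because $i_1$ lies in the convex region of $\phi$ and $i_2$ lies outside $[2, |X|/2]$, the restriction $\phi|_{[2, |X|/2]}$ has its minimum at the interior point $i_1$ and is monotone on each side of it. Hence $\max_{i \in [2, \lfloor|X|/2\rfloor]} \phi(i) = \max\bigl(\phi(2),\, \phi(\lfloor|X|/2\rfloor)\bigr)$.

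It then remains to check that both endpoint values are bounded above. At $i=2$, $\phi(2) = 4\ln(|X|e/2) - (k/2)\ln 2 = O(\ln|X|) - \Omega(|X|/\ln|X|) \to -\infty$. At $i = |X|/2$ (when $|X|$ is even), $\phi(|X|/2) = |X|\ln(2e) - (k/2)\ln(|X|/2) = 0$ for our choice of $k$. When $|X|$ is odd, a Taylor expansion of both $(|X|-1)\ln(2|X|e/(|X|-1))$ and $(k/2)\ln((|X|-1)/2)$ in powers of $1/|X|$ yields $\phi((|X|-1)/2) = -\ln 2 + o(1)$. In either case $\phi(\lfloor|X|/2\rfloor)$ is bounded uniformly in $|X|$, so $f(i) = e^{\phi(i)} = O(1)$ uniformly, as required.

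The only genuinely delicate step I anticipate is the odd-$|X|$ expansion: the first-order corrections in $(|X|-1)\ln(2|X|e/(|X|-1))$ and in $(k/2)\ln((|X|-1)/2)$ are each individually of order $|X|\cdot\frac{1}{|X|} = O(1)$, so the desired $O(1)$ bound emerges only after these $O(1)$ corrections cancel against the leading $O(|X|)$ terms. Tracking that cancellation carefully is where the calculation requires attention; the convex/concave structural argument localizing the maximum and the straightforward estimate at $i=2$ are essentially routine.
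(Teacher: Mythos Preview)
Your proof is correct and follows essentially the same strategy as the paper: both reduce the claim to showing that a smooth function of a real variable on $[2,|X|/2]$ attains its maximum at the right endpoint, via a derivative analysis. The paper rewrites the claim as $k\ge f(\tau)$ with $f(\tau)=4\tau\ln(|X|e/\tau)/\ln\tau$ and finds the critical points of $f$ by factoring a quadratic in $\ln\tau$, whereas you work directly with $\phi(i)=\ln\bigl((|X|e/i)^{2i}/i^{k/2}\bigr)$ and use the convex--then--concave structure coming from $\phi''$ to localize the critical points; the conclusions match. One simplification: your separate odd-$|X|$ expansion is unnecessary, since $\phi$ is increasing on $(i_1,|X|/2]$ and $\phi(|X|/2)=0$ even when $|X|/2$ is not an integer, so $\phi(\lfloor|X|/2\rfloor)\le\phi(|X|/2)=0$ automatically.
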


\begin{proof}
It suffices to show that
\begin{equation}
k\ge\frac{4i\ln\big(|X|e/i\big)}{\ln(i)},
\label{ine:kSigma2}
\end{equation}
for all $|X|$ large enough and $2\le i\le\lfloor|X|/2\rfloor$. For this, consider the function defined as $f(\tau):=4\tau\ln\big(|X|e/\tau\big)/\ln(\tau)$, for $2\le \tau\le|X|/2$. But note that
\begin{equation}
f'(\tau)
=4\frac{\ln(|X|e)\ln(\tau)-\ln^2(\tau)-\ln(|X|e)}{\big\{\ln(\tau)\big\}^2}
=4\frac{\ln\left(\frac{\tau_0}{\tau}\right)\cdot\ln\left(\frac{\tau}{\tau_1}\right)}{\big\{\ln(\tau)\big\}^2},
\label{ide:f'(t)4Sigma2}
\end{equation}
where the second identity assumes that $|X|>e^3$, in which case
\begin{align*}
\tau_0
&:=\exp\left\{\frac{1+\ln|X|}{2}\left(1-\sqrt{1-\frac{4}{1+\ln|X|}}\right)\right\}
=\exp\left\{1+O\left(\frac{1}{\ln|X|}\right)\right\};\\
\tau_1
&:=\exp\left\{\frac{1+\ln|X|}{2}\left(1+\sqrt{1-\frac{4}{1+\ln|X|}}\right)\right\}
=\exp\left\{\ln|X|+O\left(\frac{1}{\ln|X|}\right)\right\}.
\end{align*}
In particular, $\tau_0\sim e$ and $\tau_1\sim|X|$. Thus, as long as $|X|$ is large enough, $2<\tau_0<|X|/2<\tau_1$, and equation (\ref{ide:f'(t)4Sigma2}) implies that $f(\tau)$ is decreasing for $\tau\in[2,\tau_0]$ and increasing for $\tau\in[\tau_0,|X|/2]$, which in turn implies that $f(\tau)$ is maximized at $\tau=2$ or $\tau=|X|/2$. Since $f(2)\ll f\big(|X|/2\big)$, $f$ is maximized at $\tau=|X|/2$; in particular, the inequality in equation (\ref{ine:kSigma2}) is satisfied when $k\ge f\big(|X|/2\big)$, which shows the Lemma.
\end{proof}

Finally, due to equation (\ref{ine:Sigma22}), if $k$ satisfies the condition in Lemma~\ref{lem:Sigma2BigO1} then
\[\Sigma_1
=O\left(\frac{|X|^2}{2^k}\right)+O\left(\frac{1}{\pi^{k/2}}\sum_{i=2}^{\lfloor|X|/2\rfloor}\frac{1}{i}\right)
=o(1)+O\left(\frac{\ln|X|}{\pi^{k/2}}\right)=o(1),\]
where, for the middle identity, we have used that the harmonic series grows logarithmic with the number of terms. This completes the proof of Theorem~\ref{thm:main1}.

\subsection{Resolving Subsets of $X$ of Different Size}
\label{subsec:diffcard}

In this section, we prove Theorem~\ref{thm:main2}.

After having characterized the asymptotic order of the metric dimension of $(2^X,\Jac)$, i.e., the asymptotically optimal size of resolving sets for $2^X$, in this final section, we see how to resolve all pairs of subsets of $X$ of different size.

For this, consider the problem of resolving all distinct $a,b\in 2^X$ such that $|a|<|b|$, using a set of the form
\begin{equation}
\label{def:R2}
R_2=\{r_1,r_1^c,\ldots,r_k,r_k^c\},
\end{equation}
where $r_1,\ldots,r_k$ are i.i.d. with a $\text{Binomial}(X,1/2)$ distribution. It follows that
\[\PP\big(
R_2\text{ does not resolve all distinct }a,b\in2^X\text{ such that }|a|,|b|\le|X|/2\big)\le\Sigma_2,\]
where
\begin{equation}
\Sigma_2
:=\PP\left(\exists\,a,b\in 2^X,\text{ with } |a|<|b|\le\frac{|X|}{2},\text{ such that }\forall r\in R_2:\Jac(a,r)=\Jac(b,r)\right).
\label{def:Sigma2}
\end{equation}

\subsubsection{Sizing the probability $\Sigma_2$}
\label{subsubsec:Sizing2}

Suppose that $a,b,r\in 2^X$ are such that $|a|<|b|$, $\Jac(a,r)=\Jac(b,r)$, and $\Jac(a,r^c)=\Jac(b,r^c)$; in particular, per Corollary~\ref{cor:innerp2}, $|r^c|\cdot\big(|b|-|a|\big)=\big(|r^c|-|r|\big)\cdot\big(|br|-|ar|\big)$. But $|r^c|=|X|-|r|$, $|b|-|a|=|a^cb|-|b^ca|$, and $|br|-|ar|=|a^cbr|-|b^car|$. So, we may rewrite the last identity equivalently as follows:
\begin{equation*}
\big(|a^cb|-|b^ca|\big)\cdot\left(1-\frac{|r|}{|\curlyX|}\right)=\big(|a^cbr|-|b^car|\big)\cdot\left(1-\frac{2|r|}{|\curlyX|}\right).   
\label{ide:Cor2based}
\end{equation*}
Equivalently, if we define $\Delta_c(r):=|c|-2|cr|$ for each $c\in 2^X$, the above identity is equivalent to
\begin{equation}
\frac{\Delta_X(r)}{|X|}\cdot\frac{\Delta_{v}(r)-\Delta_{u}(r)}{|u|-|v|}=1,
\label{ide:mainintermezzo2}
\end{equation}
where $u:=a^cb$ and $v:=b^ca$ are disjoint subsets of $X$ such that $|u|>|v|$. Notably, for $r\sim\text{Binomial}(X,1/2)$, the probability of the above event depends only on the quantities $|u|$, $|v|$, and $|X|$, without regard to the specific identity of $u$ and $v$, except for the constraints that $uv=\emptyset$ and $|u|>|v|$. So we may define $\rho(i,j,X)$ as the probability of the event in (\ref{ide:mainintermezzo2})---when $(u,v)\in 2^X\times 2^X$ are such that $uv=\emptyset$, $|u|=i>j=|v|$, and $r\sim\text{Binomial}(X,1/2)$. 

It follows from the above discussion that if
\[\mathcal{P}:=\left\{(u,v)\in 2^X\times 2^X\text{ such that }uv=\emptyset\text{ and }|v|<|u|\right\},\]
then
\[\Sigma_2
\le
\PP\left(\exists\,(u,v)\in\mathcal{P}\text{ such that }\forall t\in\{1,\ldots,k\}:\frac{\Delta_X(r_t)}{|X|}\cdot\frac{\Delta_{v}(r_t)-\Delta_{u}(r_t)}{|u|-|v|}=1\right).\]
But note that the random vectors $\big(\Delta_u(r_t),\Delta_v(r_t),\Delta_X(r_t)\big)$, with $t=1,\ldots,k$, are i.i.d. for any given $(u,v)\in\mathcal{P}$. As a result:
\begin{equation}
\Sigma_2
\le\sum_{i=1}^{|X|}\sum_j{|X|\choose i,j,|X|-i-j}\,\rho^k\big(i,j,X\big),
\label{ine:almostthere}
\end{equation}
where the index $j$ is the inner sum above is such that $0\le j<i$ and $(i+j)\le|X|$.

\begin{lemma}
If $1\le i\le|X|$ and $0\le j<i$, with $(i+j)\le|X|$, then 
$\rho(i,j,X)\le4\exp\left(-\frac{\sqrt{|X|}}{2}\right)$.
\label{lem:ineqrho}
\end{lemma}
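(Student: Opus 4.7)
The plan is a union bound based on Hoeffding's inequality, after reparameterizing the event in terms of i.i.d.\ Rademacher variables.

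First, I would set $\sigma_x := 2\,\mathbf{1}[x\in r]-1$ for each $x\in X$, so that $\{\sigma_x\}_{x\in X}$ are i.i.d.\ Rademacher under $r\sim\text{Binomial}(X,1/2)$. A direct computation gives $\Delta_c(r) = -\sum_{x\in c}\sigma_x$ for any $c\in 2^X$, so defining $A:=\sum_{x\in u}\sigma_x$, $B:=\sum_{x\in v}\sigma_x$, $C:=\sum_{x\in X\setminus(u\cup v)}\sigma_x$, and $T := A+B+C = -\Delta_X(r)$, the identity~(\ref{ide:mainintermezzo2}) translates into the clean polynomial equation
\[
T\cdot(B-A) \;=\; (i-j)\,|X|.
\]
In particular, the event forces $|T|\cdot|B-A|\ge (i-j)|X|$, and crucially $A,B,C$ are independent signed sums of $i$, $j$, and $|X|-i-j$ Rademacher variables respectively.

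Second, for any thresholds $s,t>0$ with $st\le (i-j)|X|$, a union bound gives $\rho(i,j,X) \le \PP(|T|\ge s) + \PP(|B-A|\ge t)$. Applying Hoeffding's inequality to each sum yields $\PP(|T|\ge s)\le 2\exp(-s^2/(2|X|))$ and $\PP(|B-A|\ge t)\le 2\exp(-t^2/(2(i+j)))$. Choosing $s := |X|^{3/4}$ and $t := (i+j)^{1/2}|X|^{1/4}$ equalizes both tail bounds at $2\exp(-\sqrt{|X|}/2)$, yielding $\rho(i,j,X)\le 4\exp(-\sqrt{|X|}/2)$ provided $st = (i+j)^{1/2}|X| \le (i-j)|X|$, i.e., provided $(i-j)^2 \ge i+j$. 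This handles, in particular, all cases with $j=0$ and a large part of the parameter range.

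In the complementary regime $(i-j)^2 < i+j$, the raw union bound is too weak, so I would refine the argument by conditioning on $(A,B)$. The equation $T(B-A) = (i-j)|X|$ together with $T = A+B+C$ determines $C$ uniquely as $c^\star(A,B) := (i-j)|X|/(B-A) - (A+B)$ when $A\ne B$ (the case $A=B$ is immediately ruled out since it would force $0=(i-j)|X|>0$). Since $C$ is an independent signed sum of $|X|-i-j$ Rademachers, Hoeffding gives $\PP(C = c^\star(A,B) \mid A,B) \le 2\exp(-c^\star(A,B)^2/(2(|X|-i-j)))$, and the triangle inequality yields $|c^\star|\ge (i-j)|X|/|B-A| - (i+j)$, which is large precisely when $|B-A|$ is small. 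Splitting on the size of $|B-A|$ with a threshold proportional to $(i-j)|X|/(i+j)$ and combining with the Hoeffding tail bound for $B-A$ provides the desired exponent $\sqrt{|X|}/2$ after optimization.

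The main obstacle I anticipate is the calibration in this second stage: verifying that the exponents from each branch of the case split remain at least $\sqrt{|X|}/2$ uniformly across the hard regime $(i-j)^2<i+j$, $i+j\le |X|$, so that both tails together recover precisely the stated bound $4\exp(-\sqrt{|X|}/2)$.
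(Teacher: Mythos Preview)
Your first stage is exactly the paper's proof: a union bound on the two factors of the product $T\,(B-A)=(i-j)\,|X|$, each tail controlled by Hoeffding. The paper parametrizes the split by a free $\tau>0$, taking thresholds $\sqrt{2\tau|X|}$ and $(i-j)\sqrt{|X|/(2\tau)}$ (whose product is automatically $(i-j)|X|$, so no constraint like your $st\le(i-j)|X|$ appears), obtains $\rho(i,j,X)\le 4e^{-\tau}$ at the equalizing choice $\tau=\tfrac{i-j}{2}\sqrt{|X|/(i+j)}$, and then asserts $\tau\ge\sqrt{|X|}/2$ by a one-line monotonicity remark. But the inequality $\tau\ge\sqrt{|X|}/2$ is \emph{literally} your condition $(i-j)^2\ge i+j$: it fails, for example, at $i=2$, $j=1$, where $\tau=\sqrt{|X|}/(2\sqrt{3})$. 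So the paper does not carry out anything like your second stage; it simply does not recognize that the regime $(i-j)^2<i+j$ needs separate treatment, and its monotonicity justification (``the first factor is increasing in $i$, the second is decreasing in $j/i$'') does not actually bound the product from below by $1$.

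Your diagnosis is therefore sharper than the paper's own argument: the union-bound-plus-Hoeffding route, optimized over the threshold, yields only $4\exp\bigl(-\tfrac{i-j}{2}\sqrt{|X|/(i+j)}\bigr)$, which falls short of the stated lemma exactly when $(i-j)^2<i+j$. Your conditioning idea---fix $(A,B)$, solve for $C$, and bound $\PP(C=c^\star)$ via a point-probability or local-limit estimate---is a reasonable route beyond what the paper does, and the calibration you flag is indeed the substantive remaining work. Note that in the hard regime $i-j$ is small while $i+j$ can be large, so $|c^\star|\approx (i-j)|X|/|B-A|$ is typically of order $|X|/\sqrt{i+j}\ge\sqrt{|X|}$, which is encouraging for recovering the exponent $\sqrt{|X|}/2$; but this needs to be made uniform, and the $-(A+B)$ correction handled, before the lemma as stated is fully proved.
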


\begin{proof}
Let $(u,v)\in\mathcal{P}$ be such that $|u|=i$ and $j=|v|$; in particular, $i>j$. Then, for each $\tau>0$:
\begin{align*}
\rho(i,j,X)
&=\PP\left(\Delta_X(r)\cdot\big(\Delta_{v}(r)-\Delta_{u}(r)\big)=(i-j)\cdot|X|\right)\\
&\le\PP\left(|\Delta_X(r)|\ge\sqrt{2\tau|X|}\right)+\PP\left(|\Delta_{u}(r)-\Delta_{v}(r)|\ge(i-j)\sqrt{\frac{|X|}{2\tau}}\right)\\
&\le 2\left\{\exp\left(-\tau\right)+\exp\left(-\frac{(i-j)^2|X|}{4(i+j)\tau}\right)\right\},
\end{align*}
where for the last inequality we have used the well-known Hoeffding's inequality, and that $2\big(\Delta_{u}(r)-\Delta_{v}(r)\big)$ has the same distribution as $\sum_{k=1}^{i+j}\big(Z_k-\EE(Z_k)\big)$, where $Z_1,\ldots,Z_{i+j}$ are independent random variables, with $Z_k\sim\text{Bernoulli}(1/2)$ for $1\le k\le i$, and $(-Z_k)\sim\text{Bernoulli}(1/2)$ for $i<k\le i+j$. Therefore, by selecting
\begin{equation}
\tau:=\frac{i-j}{2}\sqrt{\frac{|X|}{i+j}}
\label{def:opttau1}
\end{equation}
we obtain that 
\begin{equation}
\rho(i,j,X)\le 4e^{-\tau}.
\label{def:opttau2}
\end{equation}
But
\begin{equation}
\tau=\sqrt{i}\cdot\frac{1-j/i}{\sqrt{1+j/i}}\cdot\frac{\sqrt{|X|}}{2} \ge\frac{\sqrt{|X|}}{2},
\label{def:opttau3}
\end{equation}
because the first factor above is an increasing function of $i$, whereas the second factor is a decreasing function of $j/i$. The lemma is now a direct consequence of the inequalities in (\ref{def:opttau2})-(\ref{def:opttau3}).
\end{proof}

\begin{rmk}
The choice of $\tau$ in (\ref{def:opttau1}) is somewhat optimal when $\tau\ge1$, which is a necessary condition for the upper-bound in (\ref{def:opttau2}) to be non-trivial. (The latter requires of course $\tau\ge 2\ln 2$ which, based on (\ref{def:opttau3}), can be guaranteed as soon as $|X|\ge8$.) Indeed, from the last proof: $\rho(i,j,X)\le2 f(t)$, where $f(t):=e^{-t}+e^{-\tau^2/t}$ for $t>0$. But note that $f'(t) 
=\left(g(\tau^2/t)-g(t)\right)/t$, with $g(t):=t e^{-t}$ for $t>0$; hence $t=\tau$ is a critical point of $f(t)$. Moreover, since
$f''\big(t\big) =2\,e^{-t}\left(1-t^{-1}\right)$, $t=\tau$ is a local minimum when $\tau>1$. In particular, since $f'''(1)=0$ but $f''''(1)=2e^{-1}>0$ when $\tau=1$, $t=\tau$ is a local minimum of $f(t)$ when $\tau\ge1$. 
\end{rmk}

Let $\rho_X$ be the upper bound for $\rho(i,j,X)$ given in Lemma~\ref{lem:ineqrho}. It follows from (\ref{ine:almostthere}) that
\begin{align*}
\Sigma_2
&\le\rho^k_X\sum_{i=1}^{|X|}\sum_j{|X|\choose i,j,|X|-i-j}\\
&\le\rho^k_X\sum_{i=1}^{|X|}\sum_j\frac{|X|^{i+j}}{i!\,j!}\\
&\le\rho^k_X\left(\sum_{i=1}^{|X|}\frac{|X|^i}{i!}\right)^2\\
&=\rho^k_X e^{2|X|}\,\left(\frac{\Gamma\left(|X|+1,|X|\right)}{|X|!}\right)^2,
\end{align*}
where, for an integer $n>0$ and $x\in\RR$, $\Gamma(n,x):=(n-1)!\,e^{-x}\sum\limits_{i=0}^{n-1}\frac{x^k}{(n-1)!}=\int\limits_x^\infty t^{n-1}e^{-t}\,dt$ is the (upper) incomplete Gamma function. Finally, due to~\cite[Proposition 2.7]{Pin23}, $\Gamma(|X|+1,|X|)=O\big(|X|^{|X|}e^{-|X|}\big)$. Consequently,
\[\Sigma_2=O\left(\frac{\rho_X^k \,|X|^{2|X|}}{(|X|!)^2}\right)=O\left(\frac{\rho_X^k e^{2|X|}}{|X|}\right)=O\left(\frac{e^{2|X|+k\ln(\rho_X)}}{|X|}\right),\]
where we have used the Stirling's approximation and the exp-log transform. In particular, for any $\epsilon<1$, if select $k$ so that $2|X|+k\ln(\rho_X)\le\epsilon\ln|X|$, for instance, $k=\left\lceil\frac{4|X|-2\epsilon\ln|X|}{\sqrt{|X|}-4\ln2}\right\rceil\sim4\sqrt{|X|}$, then $\Sigma_2=o(1)$, which completes the proof of Theorem~\ref{thm:main2}.

\subsection{Resolving Comparatively Small Subsets of $X$}
\label{subsec:smallcard}

In this section we prove Corollary~\ref{cor:main3}, which is the consequence of arguments already used in the proofs of theorems~\ref{thm:main1} and~\ref{thm:main2}. For this, let $0<\epsilon<1$, and $1\le W\le (1-\epsilon)(\ln\pi)\sqrt{|X|}/\ln|X|$ be an integer. 

To show the Corollary, we reconsider the set $R_2$ in (\ref{def:R2}) with $k\ge(4+\epsilon)\sqrt{|X|}$. By distinguishing pairs $a,b\in 2^X$ such that $|a|=|b|$ from $|a|\ne|b|$, we find  this time that
\begin{equation}
\PP\left(\exists\,a,b\in 2^X\text{ with }a\ne b\text{ such that }\forall r\in R_2:\Jac(a,r)=\Jac(b,r)\right)\le\Sigma_2+\Sigma_3,
\label{ide:last}
\end{equation}
where $\Sigma_2$ is the double-sum in (\ref{ine:almostthere}), and  $\Sigma_3$ is a truncated version of the summation in (\ref{ine:Sigma21}). Specifically
\[\Sigma_3:=\sum_{i=1}^{W}{|X|\choose i,i,|X|-2i}\left\{{2i\choose i}\left(\frac{1}{2}\right)^{2i}\right\}^k.\]
But, from the discussion in Section~\ref{subsubsec:Sizing2}, we already know that $\Sigma_2=o(1)$. On the other hand, from the discussion in Section~\ref{subsubsec:Sizing1} that led to (\ref{ine:Sigma22}), we can say that
\[\Sigma_3= O\left(\sum_{i=1}^{W}\frac{|X|^{2i}}{(i!)^2\,(i\pi)^{k/2}}\right).\]
As a result
\[\Sigma_3
= O\left(\frac{|X|^{2W}}{\pi^{k/2}}\sum_{i=1}^{W}\frac{1}{(i!)^2}\right) 
= O\left(\frac{|X|^{2W}}{\pi^{k/2}}\right) 
= O\left(\frac{|X|^{2W}}{\pi^{2\sqrt{|X|}}}\right)
= O\left(\pi^{-2\epsilon\sqrt{|X|}}\right),\]
where for the last two asymptotic bounds we have use the constrains on $k$ and $W$. The Corollary is now a direct consequence of the inequality in (\ref{ide:last}).

\newpage 

\hfill\\
\noindent\textbf{Acknowledgments.} This work was partially funded by the NSF grant No. 1836914.


\end{document}